\newcommand{\Comment}[1]{}
\newcommand{\SmallSpace}{\vspace*{-1.5ex}}
\newcommand\resetsubfigs{\setcounter{sub\@captype}{0}}
\newtheorem{Theorem}{Theorem}
\newcommand{\coord}{\bm{\alpha}}
\newcommand{\tcoord}{\bm{\alpha}}
\newcommand{\ComAdjVec}[1]{\bm{x}_{C_{#1}}}
\newcommand{\pertMatrix}{\widetilde{A}}
\newcommand{\pertAdjVec}{\tilde{\bm{x}}}
\newcommand{\kmeans}{$k$-means algorithm }
\begin{document}

\title{On Spectral Analysis of Directed Signed Graphs}

\author{%
{Yuemeng Li{\small $~^{1}$}, Xintao Wu{\small $~^{2}$}, Aidong Lu{\small $~^{1}$}}%
\\
$^{1}$\, University of North Carolina at Charlotte, USA, \{yli60,alu1\} @uncc.edu  \\
$^{2}$\, University of Arkansas, USA, xintaowu@uark.edu
}

\date{}

\maketitle

\begin{abstract} \small\baselineskip=9pt
It has been shown that the adjacency eigenspace of a network contains key information of its underlying structure. However, there has been no study on spectral analysis of the adjacency matrices of directed signed graphs. In this paper, we  derive theoretical approximations of spectral projections from such directed signed networks using matrix perturbation theory.  We use the derived theoretical results to study the influences of negative intra cluster and inter cluster  directed edges on node spectral projections. We then develop a spectral clustering based graph partition algorithm, SC-DSG, and conduct evaluations on both synthetic and real datasets. Both theoretical analysis and empirical evaluation demonstrate the effectiveness of the proposed algorithm.
\end{abstract}

%\keywords{Directed graphs; Signed graphs; Asymmetric adjacency matrices; Matrix perturbation; Spectral projection; Graph partition; Handle complex eigenpairs}

\section{Introduction}

In social networks, relationships between two individuals are often directed, such as Twitter following, phone calls, and voting. Directed graphs are used to capture asymmetric relationships between individuals.
Spectral properties for directed graphs have  been studied abundantly in the past (refer to the surveys \cite{Seary2003,aouchiche2010survey,Malliaros2013}).

Relationships in social networks could  have more than  two status like presence or absence of a trust/friendship between two individuals. They could also be negative to express distrust or dislike. Signed networks are used for this purpose. Spectral analysis of signed graphs have also been studied \cite{Zaslavsky2008}. For example, the work \cite{hou2005bounds} studied the bounds for the least Laplacian eigenvalue of a signed graph. A later work \cite{kunegis2009learning} used spectral properties of the signed graph for link prediction. The work \cite{Traag2009} extended the modularity metric for unsigned graphs to the signed modularity for signed graphs. The authors in \cite{kunegis2010spectral} studied the spectral properties of signed normalized Laplacian transformation from the original signed adjacency matrix and developed methods for spectral clustering, link prediction and graph structure visualization.

However, spectral analysis of directed signed graphs (DSGs) has not been thoroughly studied, since previous works focused on either unsigned graphs or undirected signed graphs.
In the ideal case of DSGs, all the intra community edges are positive and all the inter community edges are negative since the members within one community tend to hold the same opinion towards each other while members from different communities tend to dispute.
However, in real world datasets such as Epinion, negative links are also present within communities and some positive links are present between communities.

It was shown in \cite{Wu2011} and \cite{li2015analysis} that matrix perturbation theories can be used as a powerful tool for explaining the effects of inter community edges on the spectral projection behaviours of the given adjacency matrix directly. The former work provided theoretical results for undirected graphs, while the later work conducted theoretical analysis for directed graphs. In \cite{Wu2011b}, the authors analyzed the $K$-balanced undirected signed graphs by using matrix perturbation approach. In a recent work \cite{wu2014spectral}, the effects of negative edges on the spectral properties of signed and dispute networks were studied. However, the influences of negative edges to the spectral properties of DSGs remain unclear, so many problems in this domain are still open.

The core idea of applying the matrix perturbation theories on spectral graph analysis is to model the observed graph (with $K$ communities) as the perturbation of intra-community edges on a $K$-block graph (with $K$ disconnected communities) and study how the spectral space formed by leading eigenvectors as well as node projections in the space are changed before and after perturbation. However, when applying the matrix perturbation theories on DSGs, one main difficulty is to deal with the complex eigenpairs associated with the asymmetric adjacency matrix. In \cite{li2015analysis}, the authors utilized the strong-connectedness property of the communities  and the real Perron-Frobenius eigenvalue and eigenvector of each community, thus eliminating the need for dealing with complex eigenpairs. However, when the graphs have negative intra cluster edges, the Perron-Frobenuius eigenpairs may not be real any more. In this paper, we propose to handle the inter cluster and intra cluster negative entries of DSGs separately.

We apply matrix perturbation theories to derive several key theoretical results for analyzing negative inter cluster edges. Our key results can answer the following important questions: How will the negative intra cluster edges affect the spectral projections of each node? Will negatively linked nodes be pushed away from each other, while positively linked nodes be pulled towards each other like those in undirected signed graphs? What is the role of the directionality of an edge on node spectral coordinates? For negative intra cluster perturbation, we study how to deal with complex eigenpairs for DSGs.  We analyze  how intra cluster negative edges affect the Perron Frobenius eigenpair of the corresponding clusters. In particular, we explain why negative edges change real eigenpairs to complex eigenpairs based on the linear algebra results on the nonnegative irreducible and eventually non-negative irreducible states.  These questions are crucial in identifying the spectral properties of cluster relationships and developing spectral clustering algorithm for DSGs.
We conduct evaluations on several synthetic datasets and real networks and compare the accuracy results with several state-of-the-art spectral clustering methods. Results demonstrate the effectiveness of the proposed method.

To summarize, our paper has the following contributions. We explore the spectral space of the adjacency matrix for the DSG and derive theoretical results about the change of the spectral coordinate due to edges with different signs. To our best knowledge, this is the first work on exploring the spectral properties of the DSG. We propose a method to handle complex eigenpairs when spectral radii do not exist as eigenvalues. The proposed algorithm  can detect clusters with opposing relationships as well as clusters that are structurally separated from each other but with neutral relationships.

\section{Preliminaries}\label{sect:BG}

A directed signed graph with $n$ nodes can be represented as its adjacency matrix $A_{n\times n}$ with $A_{ij} = 1$ (-1)if there exists a positive (negative) edge pointing from node $v_i$ to node $v_j$ and $A_{ij} = 0$ otherwise. Since  $A_{ij}$ and $A_{ji}$ may not have the same value, $A$ is asymmetric.

\subsection{Spectral Projections}

The spectral decomposition of $A$ takes the form $A=\sum_{i}\lambda_i\bm{x}_i\bm{x}_i^T$.
When the graph is undirected, all the eigenvalues $(\lambda_1,\cdots, \lambda_n)$ are real and are assumed to be in descending order. The eigenvectors are sorted accordingly.
 \begin{equation}\label{illustration}
\begin{array}{@{}r@{}c@{}c@{}}
&\phantom{w}\begin{array}{@{}ccc}\bm{x}_1\phantom{iii}\phantom{\cdots}&\bm{x}_{i}&\phantom{\cdots}\phantom{iii}\bm{x}_{K}\end{array}& \begin{array}{cc}&\phantom{b}\bm{x}_n\end{array}\\
&\phantom{b}\downarrow &\\
\bm{\alpha}_{u}\rightarrow & \left( \begin{array}{l|c|r} x_{11}\ \cdots & x_{i1} & \cdots\ x_{K1}\\
\phantom{ij}\vdots        & \vdots &        \vdots\phantom{ij}\\
\hline x_{1u}\ \cdots & x_{iu} & \cdots\ x_{Ku}\\ \hline
\phantom{ij}\vdots        & \vdots &        \vdots\phantom{ij}\\
x_{1n}\ \cdots & x_{in} & \cdots\ x_{Kn}\\
\end{array}\right. &
\left.\begin{array}{@{}cc@{}}
\cdots & x_{n1}\\
       & \vdots\\
\cdots & x_{nu}\\
       & \vdots\\
\cdots & x_{nn}\\
\end{array}\right)\\
\end{array}
\end{equation}
The basis of the spectral space are formed by eigenvectors of the given adjacency matrix. The spectral space is of full rank $n$, when all the eigenvectors are linearly independent. If each row is treated as a coordinate in the $n$ dimensional space, then all the nodes can be projected into such a spectral space as shown in equation \ref{illustration}.  In most application, only the first $K$ eigenpairs contain major topological information. The row vector $\coord_u =(x_{1u}, x_{2u},\cdots,x_{Ku})$ are the coordinates used for clustering in this spectral subspace.

Unlike undirected graphs, the eigenvectors of DSGs do not form orthonormal basis directly.  The work \cite{li2015analysis} suggested that the $K$ eigenpairs form the perturbed Perron Frobenius simple invariant subspace for directed graphs and each eigenpair corresponds to one strongly connected community which is irreducible and non-negative.

\begin{Definition}\label{def:Char_Ploy}
The characteristic polynomial of a $n$ by $n$ matrix $A$ takes the general form:
\begin{equation}\label{eq:Char_Poly}
F(\lambda)=a_{1}*\lambda^{n}+a_{2}*\lambda^{n-1}+\cdots+a_{n}*\lambda+a_{n+1},
\end{equation}
where the roots for $F(\lambda)=0$ will be the eigenvalues of $A$.
\end{Definition}

\begin{Definition}\label{def:Spectral_Radius}
Let $\Lambda=(\lambda_{1}, \cdots, \lambda{n})$ be the eigenvalues of matrix $A$, then $\rho(A)=max(|\Lambda|)$ is called the spectral radius of $A$. In the case where complex valued eigenvalues exist, the absolute values become the moduli.
\end{Definition}

However, when negative intra community edges are included in DSGs, the eigenpair corresponding to the spectral radius could be complex. In order to avoid confusion, in this paper we assume that the top eigenvectors are sorted based on the moduli so that the first $K$ eigenvectors still correspond to the ones that form the perturbed Perron Frobenius simple invariant subspace.

\subsection{$K$-block Graph}

In matrix perturbation based spectral graph analysis, an observed graph $\tilde{A}$ with $K$ clusters is modeled as a $K$-block matrix $A$ (after permutation) perturbed by inter cluster edges $E$.
\begin{equation}\label{kcomm}
\widetilde{A} = A+E = \left(
                         \begin{array}{ccc}
                           A_1 &  & 0 \\
                            & \ddots &  \\
                           \mathbf{0} &  & A_K \\
                         \end{array}
                       \right) +E,
\end{equation}

According to the work \cite{li2015analysis}, for directed unsigned graphs, each cluster $C_i$ is assumed to be strongly connected. The dominant eigenvalue $\lambda_1$ of each component $C_i$ is positive, simple and the corresponding eigenvector $\mathbf{x_1}$ is positive.  If we choose $\ComAdjVec{i}$ to be the Perron-Frobenius eigenvectors of corresponding communities, then the eigenvectors $\mathbf{x}=(\bm{x}_1, \cdots, \bm{x}_K)$ of $A$ corresponding to $\lambda_{Ci}$s are the only eigenvectors whose non-zero components are all positive, all the entries of $\mathbf{x}$ are real valued and have the following form:
\begin{equation}\label{kcommCoord}
 (\bm{x}_1,\bm{x}_2,\cdots, \bm{x}_K) =
 \begin{pmatrix}
  \ComAdjVec{1} & \bm{0}   & \cdots & \bm{0} \\
  \bm{0}   & \ComAdjVec{2} & \cdots & \bm{0} \\
  \vdots   & \vdots   & \ddots & \vdots  \\
  \bm{0}   & \bm{0}   & \cdots & \ComAdjVec{K}
 \end{pmatrix}
\end{equation}

There is only one location of the row vector $\bm{\alpha_u}$ that has a non-zero value with the form:
\begin{equation}\label{eq:spctr-coord-0}
\bm{\alpha_u}=(0,\cdots,0,x_{iu},0,\cdots,0).
\end{equation}
The location of $x_{iu}$ indicates the $i-$th community which node $u$ belongs to and the value of $x_{iu}$ denotes the influence of node $u$ to that community.

\section{Spectral Analysis of DSGs}

For DSGs,  both community $C_i$'s and $E$ can contain negative edges.
We treat both positive and negative inter cluster edges as inter cluster perturbation and treat the negative edges within each cluster as intra cluster perturbation.
Formally, we have
\begin{equation}\label{kcomm}
\widetilde{A} = A+E_I +E_O
 \end{equation}
where $A$ is a $K$-block matrix as the same defined in Equation \ref{kcomm} and each diagonal component $A_i$ is still assumed strongly connected and nonnegative, $E_I$ is a $K$-block matrix corresponding to intra cluster perturbation and each diagonal component $E_i$ contains negative intra cluster edges, and $E_O$ contains both positive and negative inter cluster edges.
 We discuss these two situations and  derive several theoretical results to explain the associated spectral projection behaviour in Sections \ref{sect:Inter} and \ref{sect:Intra} respectively .

\subsection{Spectral Analysis of Inter Cluster Perturbation}\label{sect:Inter}

In this case,  our model is simplified as $\widetilde{A} = A+E_O$.
In \cite{li2015analysis}, the authors  studied the spectral properties of  directed unsigned graphs based on the matrix perturbation theories \cite{Stewart1990} and works \cite{Stewart1971,Stewart1993}.
Because the eigenvectors of asymmetric matrices do not form an orthonormal basis naturally, they developed a method of constructing orthonormal basis and derived the approximations of the eigenvectors when treating the graph as a perturbation from a block matrix.  The derived theories in \cite{li2015analysis} can be generalized to our DSG setting although although $E_O$ contains both positive and negative inter cluster edges. This is because both models assume each $A_i$ strongly connected and nonnegative, thus having the Perron-Frobenius simple invariant subspace. We refer their results below and then focus on how positive and negative edges in $E_O$ affect the spectral coordinates. To be consistent, in the remaining part of Section \ref{sect:Inter}, $E$ denotes $E_O$.

\begin{Theorem}\label{thm:perturbed-subspace-approx}
Let the observed graph be $\pertMatrix=A+E$ with $K$ communities and the perturbation $E$ denotes the edges connecting communities $C_1, \cdots, C_K$.
Let $(\bm{x}_1,\cdots,\bm{x}_K)$ be the relabeled Perron-Frobenius eigenvectors of $A$ for all communities, and $Q$ be the rest of the orthronormal basis constructed using Gram-Schmidt process. Then $(\bm{x}_1,\cdots,\bm{x}_K)$ is a simple invariant subspace of $A$, and the perturbed Perron-Frobenius spectral space for $\widetilde{A}$  can be approximated as:
\begin{equation}\label{eq:perturbed-subspace-approx}
(\pertAdjVec_1,\cdots, \pertAdjVec_K)  \approx (\bm{x}_1,\cdots,\bm{x}_K) + \nabla E(\frac{\bm{x}_1}{\lambda_1}, \cdots, \frac{\bm{x}_K}{\lambda_K}).
\end{equation}
where $\nabla=Q(I-\frac{L_2}{\lambda_i})^{-1}Q^H$.
\end{Theorem}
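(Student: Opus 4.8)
The plan is to treat this as a first-order perturbation result for a \emph{simple invariant subspace} in the sense of Stewart's invariant-subspace perturbation theory \cite{Stewart1990,Stewart1971,Stewart1993}, observing that the only structural fact we need about the unperturbed matrix $A$ is supplied by the Perron--Frobenius theorem, so the signs of the entries of $E=E_O$ play no role in the derivation. First I would verify the hypothesis that $(\bm{x}_1,\dots,\bm{x}_K)$ spans a simple invariant subspace of $A$. Because $A$ is block diagonal and each block $A_i$ is strongly connected and nonnegative, Perron--Frobenius guarantees that $A_i$ has a simple, strictly dominant eigenvalue $\lambda_i=\ComAdjEig{i}>0$ with positive eigenvector $\ComAdjVec{i}$. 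Embedding these eigenvectors according to the block structure as in Equation~\ref{kcommCoord} makes $\bm{x}_1,\dots,\bm{x}_K$ mutually orthogonal (their supports are disjoint), and $A(\bm{x}_1,\dots,\bm{x}_K)=(\bm{x}_1,\dots,\bm{x}_K)\diag(\lambda_1,\dots,\lambda_K)$ shows that their span is $A$-invariant.

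Next I would set $X_1=(\bm{x}_1,\dots,\bm{x}_K)$ (normalized) and complete it by the Gram--Schmidt process to an orthonormal basis $[X_1,Q]$ of the ambient space. In this basis $A$ becomes block upper triangular,
$$[X_1,Q]^H A\,[X_1,Q]=\begin{pmatrix} L_1 & H\\ 0 & L_2\end{pmatrix},$$
with $L_1=X_1^H A X_1=\diag(\lambda_1,\dots,\lambda_K)$ and $L_2=Q^H A Q$; the lower-left block vanishes precisely because $\mathrm{span}(X_1)$ is $A$-invariant, so $Q^H A X_1=Q^H X_1 L_1=0$. The subspace is simple as long as $\{\lambda_1,\dots,\lambda_K\}$ is disjoint from the spectrum of $L_2$, which holds because $L_2$ collects exactly the non-dominant eigenvalues of the blocks.

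The heart of the argument is the first-order solution of the associated Sylvester equation. Writing the perturbed invariant subspace as the span of the columns of $X_1+QP$ for a coordinate matrix $P$ over $Q$, Stewart's theory gives, to first order in $E$, the equation $L_2P-PL_1=-Q^H E X_1$. Because $L_1$ is diagonal, this decouples column by column: the $i$-th column $p_i$ satisfies $(L_2-\lambda_i I)p_i=-Q^H E\bm{x}_i$, hence $p_i=(\lambda_i I-L_2)^{-1}Q^H E\bm{x}_i$. The induced correction to the $i$-th basis vector is therefore $Qp_i=Q(\lambda_i I-L_2)^{-1}Q^H E\bm{x}_i=\frac{1}{\lambda_i}Q(I-\tfrac{L_2}{\lambda_i})^{-1}Q^H E\bm{x}_i=\nabla E\frac{\bm{x}_i}{\lambda_i}$, where $\nabla$ carries the index $i$ through $\lambda_i$ exactly as written in the statement. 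Collecting these corrections over $i=1,\dots,K$ yields Equation~\ref{eq:perturbed-subspace-approx}.

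I expect the main obstacle to be the well-definedness and control of $\nabla$ rather than the algebra. Concretely, one must guarantee the separation $\lambda_i\notin\mathrm{spec}(L_2)$ so that $(\lambda_i I-L_2)^{-1}$, and hence $\nabla=Q(I-\tfrac{L_2}{\lambda_i})^{-1}Q^H$, exists; this is where the strict dominance from Perron--Frobenius is essential, and it is unaffected by negative entries in $E$ since those enter only through the right-hand side $Q^H E\bm{x}_i$. The second point is to confirm that the neglected term is genuinely second order: the exact coordinate matrix $P$ solves a quadratic (Riccati-type) equation, and the linear approximation is valid provided $\fnorm{E}$ is small relative to the spectral gap $\min_i \mathrm{dist}(\lambda_i,\mathrm{spec}(L_2))$, which is the quantitative hypothesis implicit in the approximate equality.
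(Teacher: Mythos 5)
Your proposal is correct and takes essentially the same route as the paper: Theorem~\ref{thm:perturbed-subspace-approx} is quoted from \cite{li2015analysis}, whose derivation is precisely the Stewart-style simple-invariant-subspace argument you reconstruct --- embed the Perron--Frobenius eigenvectors of the blocks as an orthonormal, $A$-invariant set, complete it by Gram--Schmidt, reduce $A$ to block upper-triangular form with $L_1=\diag(\lambda_1,\dots,\lambda_K)$ and $L_2=Q^HAQ$, and solve the first-order Sylvester equation column by column to obtain the correction $Q(\lambda_i I-L_2)^{-1}Q^HE\bm{x}_i=\nabla E\bm{x}_i/\lambda_i$. The paper's own manipulation of $(\lambda_2 I-L_2)^{-1}$ and the Schur form inside the proof of Theorem~\ref{thm:Generalized_Rotation} confirms this is exactly the intended construction, including your observation that $\nabla$ implicitly carries the column index $i$ through $\lambda_i$.
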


The theorem could be used to derive the approximation of spectral coordinate of $\coord_u$ using the following simplified result that only takes into account of the influences of neighboring nodes from other communities. Since the edge direction indicates the flow of information, we define the outer community neighbours of a node $u \in C_i$ to be any node $v \notin C_i$ that has an edge pointing to $u$.

\begin{Theorem}\label{thm:adj-coord-approx}
For node $u\in C_i$, let $\Gamma^j_u$ denote its set of neighbors in $C_j$ for $j \in (1, \cdots, K)$. The simplified spectral coordinates $\tcoord_u$ can be approximated as:
\begin{equation}\label{eq:spctr-coord-approx}
\resizebox{.85\hsize}{!}{$ \tcoord_u\approx x_{iu}I_{i} + \left(\sum^{n}_{j=1}\nabla_{uj}\sum_{v\in\Gamma^1_u}\frac{e_{jv}x_{1v}}{\lambda_1},\cdots,
  \sum^{n}_{j=1}\nabla_{uj}\sum_{v\in\Gamma^K_u}\frac{e_{jv}x_{Kv}}{\lambda_K}\right)$},
\end{equation}
where $I_i$ is the $i$-th row of a $K$-by-$K$ identity matrix, $e_{jv}$ is the $(j,v)$ entry of E and $\nabla$ is defined in Theorem \ref{thm:perturbed-subspace-approx}.
\end{Theorem}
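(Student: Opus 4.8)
The plan is to obtain $\tcoord_u$ simply by reading off the $u$-th row of the perturbed invariant subspace provided by Theorem~\ref{thm:perturbed-subspace-approx}, and then to simplify that row using the block structure of the unperturbed eigenvectors together with the sparsity of $E$. Writing $M=(\frac{\bm{x}_1}{\lambda_1},\cdots,\frac{\bm{x}_K}{\lambda_K})$, Theorem~\ref{thm:perturbed-subspace-approx} gives $(\pertAdjVec_1,\cdots,\pertAdjVec_K)\approx(\bm{x}_1,\cdots,\bm{x}_K)+\nabla E M$; since $\tcoord_u$ is by definition the $u$-th row of the left-hand matrix, it is approximated by the $u$-th row of the right-hand side, which splits into an unperturbed contribution and a perturbation contribution that I would treat separately.

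For the unperturbed contribution I would invoke the $K$-block form of $(\bm{x}_1,\cdots,\bm{x}_K)$ in Equation~\ref{kcommCoord}. Because $u\in C_i$, every entry $x_{ku}$ with $k\neq i$ vanishes, leaving only $x_{iu}$, so the $u$-th row of $(\bm{x}_1,\cdots,\bm{x}_K)$ is exactly $x_{iu}I_i$, the first term of the claimed formula.

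For the perturbation contribution I would expand the $(u,k)$ entry of $\nabla E M$ as the triple contraction $\sum_{j}\nabla_{uj}\sum_{v}\frac{e_{jv}x_{kv}}{\lambda_k}$ and prune the inner sum in two stages. First, the block structure of $M$ forces $x_{kv}=0$ whenever $v\notin C_k$, so only nodes of $C_k$ survive. Second --- and this is the actual content of the ``simplified'' coordinate --- I would retain only the outer-community neighbors of $u$, replacing $\sum_{v}$ by $\sum_{v\in\Gamma^k_u}$. Assembling the $K$ resulting entries then reproduces the stated vector.

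The main obstacle is justifying the second pruning, namely restricting the inner sum to $\Gamma^k_u$ while keeping the outer sum $\sum_{j=1}^{n}\nabla_{uj}$ in full. Since $\Gamma^k_u$ is the neighbor set of $u$ rather than of the summation index $j$, this step does not follow from the sparsity of $E$ alone; it is the defining approximation of the simplified coordinate, and I would justify it by arguing that the first-order effect on node $u$ is dominated by $u$'s own neighbors in $C_k$, with the remaining terms absorbed into the approximation error. Directedness is essential here: because $E$ is asymmetric, its orientation decides whether the contributing neighbors are those pointing toward $u$ or away from it, so the delicate bookkeeping is to confirm that the surviving factors $e_{jv}$ and $x_{kv}$ align with the neighbor set $\Gamma^k_u$ under the chosen convention. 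Once this is settled, the remaining manipulations are routine matrix algebra.
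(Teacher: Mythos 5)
Your overall route --- reading off the $u$-th row of the approximation in Theorem~\ref{thm:perturbed-subspace-approx}, using the block structure of Equation~\ref{kcommCoord} to obtain the term $x_{iu}I_i$, and expanding the $(u,k)$ entry of $\nabla E M$ with $M=(\frac{\bm{x}_1}{\lambda_1},\cdots,\frac{\bm{x}_K}{\lambda_K})$ --- is the natural reconstruction, and in fact the only one available for comparison: the paper itself gives no proof of this theorem, since both Theorem~\ref{thm:perturbed-subspace-approx} and Theorem~\ref{thm:adj-coord-approx} are quoted from \cite{li2015analysis} (``We refer their results below''). Judged on its own, your proposal is correct through its first two steps: the unperturbed row is exactly $x_{iu}I_i$, and the $(u,k)$ entry of the perturbation is
\begin{equation*}
(\nabla E M)_{uk}=\sum_{j=1}^{n}\nabla_{uj}\sum_{v\in C_k,\,e_{jv}\neq 0}\frac{e_{jv}x_{kv}}{\lambda_k},
\end{equation*}
where the inner sum runs over the out-neighbors of the summation index $j$ in $C_k$.

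The step you flag is a genuine gap, and it is worse than you suggest: it cannot be closed by the dominance heuristic you propose, because the theorem's formula is not a controlled approximation of the exact expansion but a loose transcription of it. You can see this from the paper's own single-edge illustration immediately following the theorem. For the edge $u\rightarrow v$ with $u\in C_1$, $v\in C_2$, the exact expansion gives node $v$ the perturbed coordinate $x_{2v}+\nabla_{vu}\frac{e_{uv}}{\lambda_2}x_{2v}$ (matching the paper up to its $\nabla$ indexing), whereas the literal $\Gamma$-restricted formula gives node $v$ zero perturbation, since the required index $v'=v$ does not lie in $\Gamma^2_v$; moreover, under the paper's stated convention that outer-community neighbors are nodes pointing \emph{to} $u$, we have $\Gamma^2_u=\emptyset$, so node $u$'s perturbation would vanish as well --- yet the paper computes it as $\nabla_{u1}\frac{e_{uv}}{\lambda_2}x_{2v}\neq 0$. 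Your proposed justification (the first-order effect on $u$ is dominated by $u$'s own neighbors) amounts to using the near-diagonal structure of $\nabla$ across communities to discard the $j\neq u$ terms; but that collapses the entry to $\nabla_{uu}\sum_{v\in\Gamma^k_u}\frac{e_{uv}x_{kv}}{\lambda_k}$, a single-$j$ expression, not the stated double sum $\sum_{j=1}^{n}\nabla_{uj}\sum_{v\in\Gamma^k_u}\frac{e_{jv}x_{kv}}{\lambda_k}$, which keeps the full outer sum while restricting the inner one --- a mixture derivable neither exactly nor by your heuristic. The defensible derivation is to stop at the exact expansion displayed above and then state that the ``simplified'' coordinates are obtained by \emph{defining} the truncation that retains only the influence of $u$'s own outer-community neighbors, exactly as the paper's preamble describes (``only takes into account of the influences of neighboring nodes from other communities''); it is a modeling choice, not a consequence of Theorem~\ref{thm:perturbed-subspace-approx}.
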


The work in \cite{li2015analysis} only gave the above approximation formula and did not examine how the node spectral coordinates change under perturbation of inter cluster edges.  One reason is that the entry $\sum^{n}_{j=1}\nabla_{uj}\sum_{v\in\Gamma^i_u}\frac{e_{jv}x_{iv}}{\lambda_i}$ in the $i$-th column position of the spectral coordinate in Equation \eqref{eq:spctr-coord-approx} is very complicate compared with that of undirected graphs and hence it is difficult to determine the influence of the perturbation. In this work, we decompose the perturbation into each edge and explicitly quantify how one single inter cluster edge $u \rightarrow v$ changes the spectral coordinates of $u$ and $v$.

Without loss of generality, suppose nodes $u$ and $v$ are from community $C_1$ and $C_2$ respectively, there is a directed edge from $u$ to $v$, $u \rightarrow v$, which could be positive or negative.
Before the edge added, the spectral coordinates for nodes $u$ and $v$ in the two dimensional space are
  $     \left(
              \begin{array}{cc}
              x_{1u} & 0 \\
               \mathbf{0} & x_{2v} \\
                \end{array}
       \right).$
After the edge added, from Theorem \ref{thm:adj-coord-approx}, the spectral coordinates are
$       \left(
              \begin{array}{cc}
              x_{1u} & \nabla_{u1}\frac{e_{uv}}{\lambda_2}x_{2v} \\
               \mathbf{0} & x_{2v}+\nabla_{v1}\frac{e_{uv}}{\lambda_2}x_{2v} \\
                \end{array}
       \right).$

Our next theorem shows that the change of spectral coordinates depends on both the Perron-Frobenius eigenvalue of the node's community and the edge directionality.

\begin{Theorem}\label{thm:Generalized_Rotation}
Denote $(\lambda_1, \bm{x}_1)$ and $(\lambda_2, \bm{x}_2)$ as the Perron-Frobenius eigenpair of $C_1$ and $C_2$ respectively.  Nodes $u$ and $v$ are from community $C_1$ and $C_2$ respectively.
\begin{enumerate}
\item When  $u \rightarrow v$ is positive,
\begin{enumerate}[label=(\alph*)]
\item If $\lambda_1>\lambda_2$, node $u$ has a clockwise rotation while node $v$ stays on its original axis.
\item If $\lambda_1<\lambda_2$, node $u$ stays on its original axis while node $v$ has a clockwise rotation.
\end{enumerate}

\item When  $u \rightarrow v$ is negative,
\begin{enumerate}[label=(\alph*)]
\item If $\lambda_1>\lambda_2$, node $u$ has an anti-clockwise rotation while node $v$ stays on its original axis.
\item If $\lambda_1<\lambda_2$, node $u$ stays on its original axis while node $v$ has an anti-clockwise rotation.
\end{enumerate}
\end{enumerate}
\end{Theorem}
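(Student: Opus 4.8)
The plan is to specialize the coordinate approximation of Theorem~\ref{thm:adj-coord-approx} to the two-community, single-edge configuration set up immediately above the statement, and then extract the rotation sense from the sign of the coordinate that each endpoint acquires along the \emph{other} community's axis. Working in the two-dimensional subspace spanned by the Perron directions of $C_1$ and $C_2$, before the edge is added $u$ sits at $(x_{1u},0)$ and $v$ at $(0,x_{2v})$. Specializing Theorem~\ref{thm:adj-coord-approx} to the lone nonzero entry $e_{uv}$, the tail $u$ picks up a component along the $C_2$-axis carrying a factor $e_{uv}x_{2v}/\lambda_2$; the analogous computation for the head $v$ (which must be handled separately, since for $v$ the edge is incoming rather than outgoing) yields a component along the $C_1$-axis carrying a factor $e_{uv}x_{1u}/\lambda_1$, each scaled by the appropriate diagonal entry of the operator $\nabla$ from Theorem~\ref{thm:perturbed-subspace-approx}. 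Thus each endpoint tilts off its axis by a small angle, and the theorem splits into fixing the sign of each tilt and deciding which tilt dominates.

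To fix signs I would invoke Perron--Frobenius: each $A_i$ is strongly connected and nonnegative, so $\lambda_1,\lambda_2>0$ and the relabeled eigenvectors are strictly positive, giving $x_{1u}>0$ and $x_{2v}>0$. The remaining ingredient is the sign of the relevant diagonal entry of $\nabla=Q(I-L_2/\lambda_i)^{-1}Q^{H}$; because $\lambda_i$ is the dominant modulus, $L_2/\lambda_i$ has spectral radius below one, so $(I-L_2/\lambda_i)^{-1}$ is given by a convergent Neumann series and I expect the pertinent entry to have a fixed sign. Granting this, the acquired off-axis coordinate has sign $\sign(e_{uv})$ times a fixed sign, so positive and negative edges rotate an endpoint in opposite senses. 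Geometrically, reading $(x_{1u},0)$ as a vector on the positive horizontal axis, a negative acquired second coordinate is a clockwise tilt and a positive one an anticlockwise tilt (and dually for $v$ on the vertical axis); this produces the positive/negative split between cases~1 and~2.

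To decide which endpoint rotates I would compare the two tilt angles. To first order the angle of $u$ equals its off-axis coordinate divided by $x_{1u}$, hence scales like $1/\lambda_2$, while the angle of $v$ scales like $1/\lambda_1$. Consequently the endpoint whose partner community carries the \emph{smaller} Perron eigenvalue sweeps the larger angle, while the other endpoint stays on its axis to leading order. When $\lambda_1>\lambda_2$ the dominant tilt is $u$'s, and when $\lambda_1<\lambda_2$ it is $v$'s; combining this with the edge-sign determination above yields all four sub-cases (a)/(b) under both~1 and~2.

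The hard part is the sign analysis together with the directional bookkeeping. Unlike the undirected case, $\nabla$ is the resolvent of a non-normal operator, so its diagonal entries are not obviously sign-definite and the Neumann-series argument must be made rigorous using the nonnegativity of $A$ and the explicit Gram--Schmidt structure of $Q$. Moreover, because $u$ is the tail and $v$ the head of $u\to v$, the two endpoints enter the perturbation asymmetrically (in effect through right versus left eigenvector contributions of the asymmetric $A$), so I must verify that the clockwise/anticlockwise sense comes out consistently for both and that the subdominant endpoint's residual displacement is genuinely of lower order, given that Theorem~\ref{thm:adj-coord-approx} is only a first-order approximation.
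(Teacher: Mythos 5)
There is a genuine gap, and it sits exactly where you flagged the ``hard part.'' First, your specialization of Theorem~\ref{thm:adj-coord-approx} to the head node $v$ is wrong. For a single directed edge $u \rightarrow v$, the perturbation matrix $E$ has exactly one nonzero entry $e_{uv}$ (row $u$, column $v$), and the resulting coordinates are $u \mapsto (x_{1u},\, \nabla_{u1}\frac{e_{uv}}{\lambda_2}x_{2v})$ and $v \mapsto (0,\, x_{2v}+\nabla_{v1}\frac{e_{uv}}{\lambda_2}x_{2v})$: the head $v$ moves \emph{along} its own axis, but its coordinate on the $C_1$-axis remains exactly zero. The term $\propto e_{uv}x_{1u}/\lambda_1$ you assign to $v$ would require a nonzero entry of $E$ in column $u$, which does not exist, so there are never two competing tilts to compare. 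Consequently your dominance argument has nothing to act on, and it would fail on its own terms anyway: both of your putative tilt angles are first order in $e_{uv}$ and differ only by the bounded constant factor $\lambda_2/\lambda_1$, which cannot make the smaller one ``stay on its axis to leading order.'' In the paper's proof, it is in every case the tail $u$ whose angle $\beta = \arctan\bigl(\nabla_{u1}\frac{e_{uv}}{\lambda_2}x_{2v}/x_{1u}\bigr)$ is analyzed; the eigenvalue ordering does not select which node rotates, it determines the \emph{sign} of $\beta$.

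That sign mechanism is the key idea your sketch is missing. You assert that the rotation sense is $\sign(e_{uv})$ times a fixed sign, with the sign-definiteness of $\nabla$'s entries deferred to an unproven Neumann-series claim. The paper instead writes $\nabla_{u1}\cdot\frac{1}{\lambda_2}$ in its full form $(Y(\lambda_2 I - L_2)^{-1}Y^{H})_{u1}$, invokes Schur's theorem to note that $L_2$ is upper triangular with the remaining eigenvalues of $A$ on its diagonal, so that the diagonal of $(\lambda_2 I - L_2)^{-1}$ is $\bigl((\lambda_2-\lambda_1)^{-1}, (\lambda_2-\lambda_3)^{-1},\cdots,(\lambda_2-\lambda_n)^{-1}\bigr)$, and then factors out $(\lambda_2-\lambda_1)^{-1}$, relabeling the rest as $\nabla^{*}$. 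The sign of $u$'s off-axis coordinate is then governed by $\sign(e_{uv})\cdot\sign(\lambda_2-\lambda_1)$ times positive Perron quantities, so the clockwise/anticlockwise sense flips with the eigenvalue ordering as well as with the edge sign (this is how the paper's proof distinguishes its (a) and (b) subcases). A Neumann expansion of $(I-L_2/\lambda_i)^{-1}$ does not surface the factor $(\lambda_2-\lambda_1)^{-1}$ on which the entire case analysis rests, so the route you propose cannot close the argument; the Schur-triangularization of the resolvent is the step you would need to supply.
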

\begin{proof}
For 1(a), node $v$ has spectral coordinate $(0, {x}_{2v}+\nabla_{v1}\frac{e_{uv}}{\lambda_2}x_{2v})$. Therefore, node $v$ will stay on its original axis. On the other hand, node $u$ has spectral coordinate $({x}_{1u}, \nabla_{u1}\frac{e_{uv}}{\lambda_2}x_{2v})$. The angle $\beta$ of the spectral coordinate vector of node $u$ with the ${x}_1$ axis will be $\arctan(\frac{\nabla_{u1}\frac{e_{uv}}{\lambda_2}x_{2v}}{{x}_{1u}})$. The top part $\nabla_{u1}*\frac{1}{\lambda_2}$ takes the full form as $(Y(\lambda_2 I-L_2)^{-1}Y^H)_{u1}$ as in Theorem \ref{thm:perturbed-subspace-approx}.
The diagonal of $L_2$ are the other eigenvectors of $A$ by the construction process. Furthermore, $L_2$ itself is upper triangular by Schur's Theorem. Then the diagonal entries of $(\lambda_2 I-L_2)^{-1}$ becomes $(\lambda_2-\lambda_1,\lambda_2-\lambda_3,\cdots, \lambda_2-\lambda_n)^{-1}$. If we divide $\nabla$ by the term $(\lambda_2-\lambda_1)^{-1}$ and relabel it as $\nabla^*$, then the spectral coordinates of $u$ becomes $({x}_{iu},(\lambda_2-\lambda_1)^{-1}\nabla^{*}_{u1}e_{uv}x_{2v})$. Then the angle $\beta$ becomes $\arctan(\frac{(\lambda_2-\lambda_1)^{-1}\nabla^{*}_{u1}e_{uv}x_{2v}}{{x}_{1u}})$. Since $\lambda_2-\lambda_1<0$, $\beta$ will be a negative angle, which indicates that node $u$ will rotate clockwise to the fourth quadrant.

For 1(b), if $\lambda_1<\lambda_2$, by relabeling $\nabla^{*}$ if necessary, the angle $\beta$ takes the same equation as $\arctan(\frac{(\lambda_2-\lambda_1)^{-1}\nabla^{*}_{u1}e_{uv}x_{2v}}{{x}_{1u}})$. Since $\lambda_1<\lambda_2$, $\beta$ will be a positive angle, which indicates that node $u$ will rotate counter-clockwise and the vector will remain in the first quadrant.

For 2(a), $e_{uv}<0$ and $\lambda_1>\lambda_2$.  By relabeling $\nabla^{*}$ if necessary, the angle $\beta$ takes the same equation as $\arctan(\frac{(\lambda_2-\lambda_1)^{-1}\nabla^{*}_{u1}e_{uv}x_{2v}}{{x}_{1u}})$. Since $\lambda_1>\lambda_2$, then $(\lambda_2-\lambda_1)^{-1}e_{uv}$ will be positive, which indicates that node $u$ will rotate counter-clockwise and the vector will remain in the first quadrant.

For 2(b), $e_{uv}<0$  and $\lambda_1<\lambda_2$. $(\lambda_2-\lambda_1)^{-1}e_{uv}$ will be negative, which indicates that node $u$ will rotate clockwise and the vector will be in the fourth quadrant.

\end{proof}

Note that the spectral coordinates are non-negative for all Perron Frobenius eigenvectors before perturbations, so that the above proof holds true. However, in real applications, iteration methods could some times solve non-positive Perron Frobenius eigenvectors, so the observed rotation direction is reversed. Therefore, in general, for $\lambda_1<\lambda_2$, $u$ will rotate away from those nodes of $C_2$, but $u$ will rotate towards those nodes of $C_2$ on the other hand.

\begin{figure}
\centering
\includegraphics[height=55mm,width=55mm]{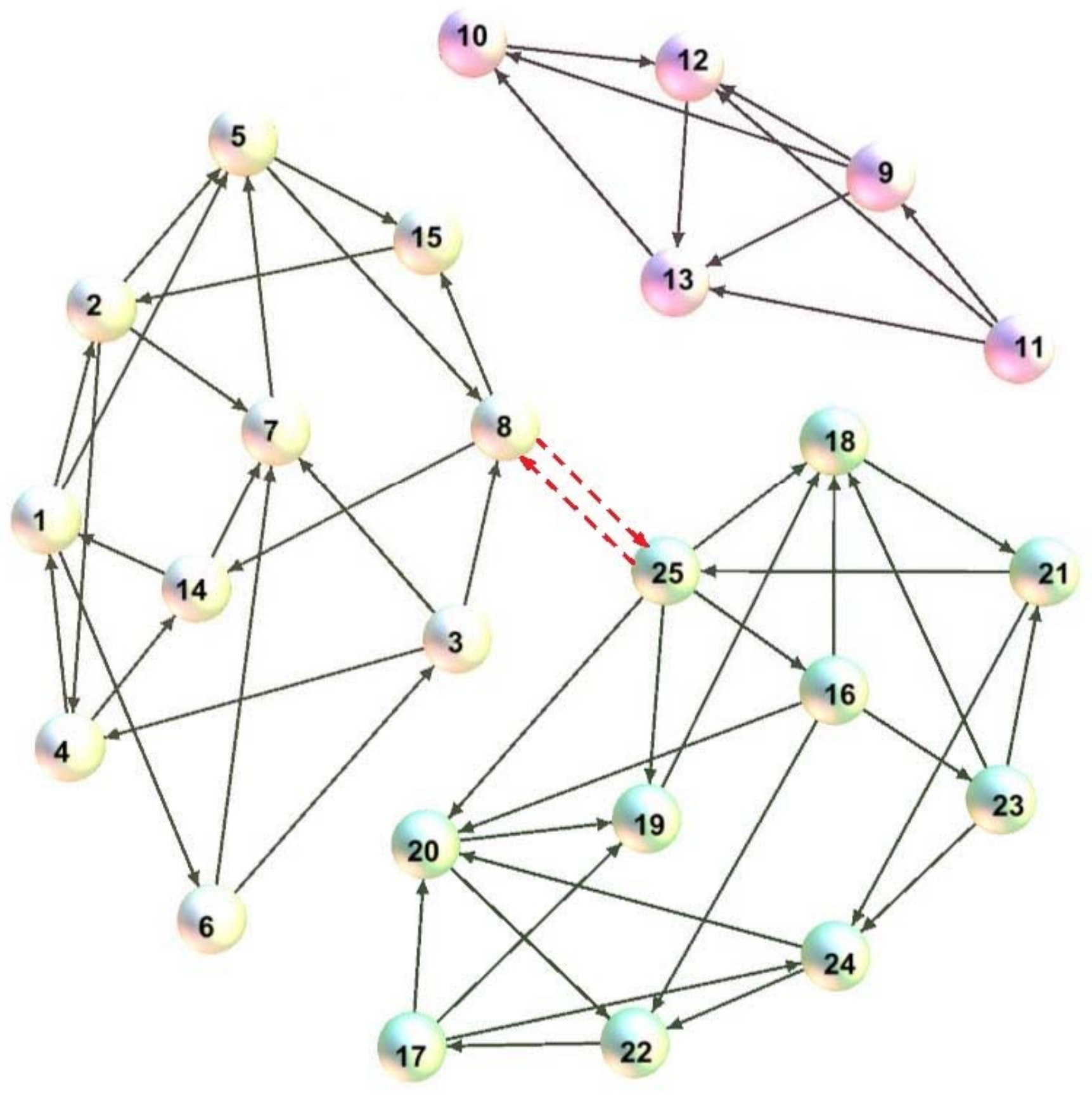}
\caption{Example graph with 3 communities, where node 8 and 25 are connected by negative or positive edges.}
\label{fig:3com}
\end{figure}

\begin{figure}[ht]
\resetsubfigs
\centering
\subfigure[Adding 8$\rightarrow$25 positive]{%
\includegraphics[height=34mm,width=38mm]{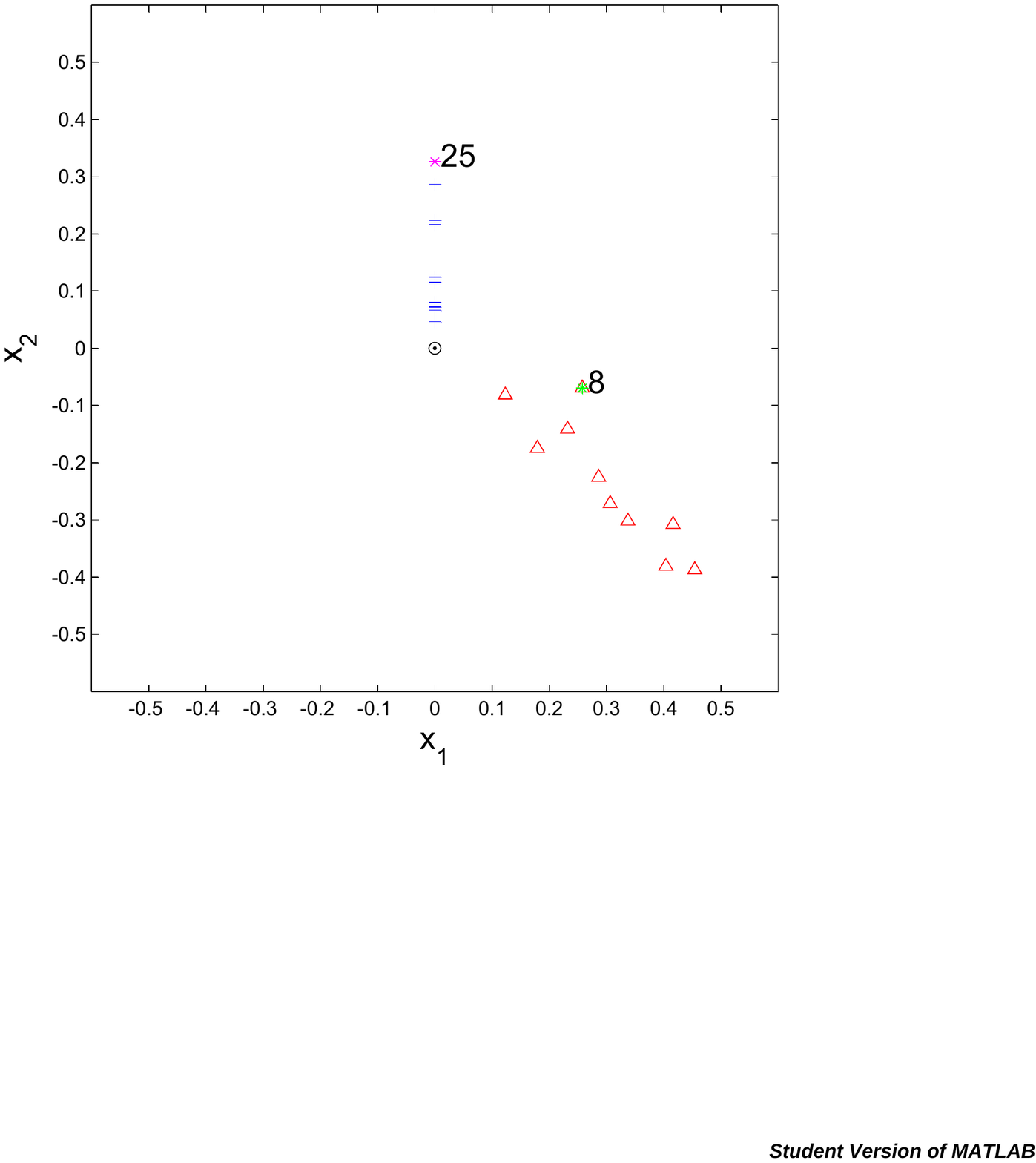}
\label{fig:825}}
\quad
\subfigure[Adding 8$\rightarrow$25 negative]{%
\includegraphics[height=34mm,width=38mm]{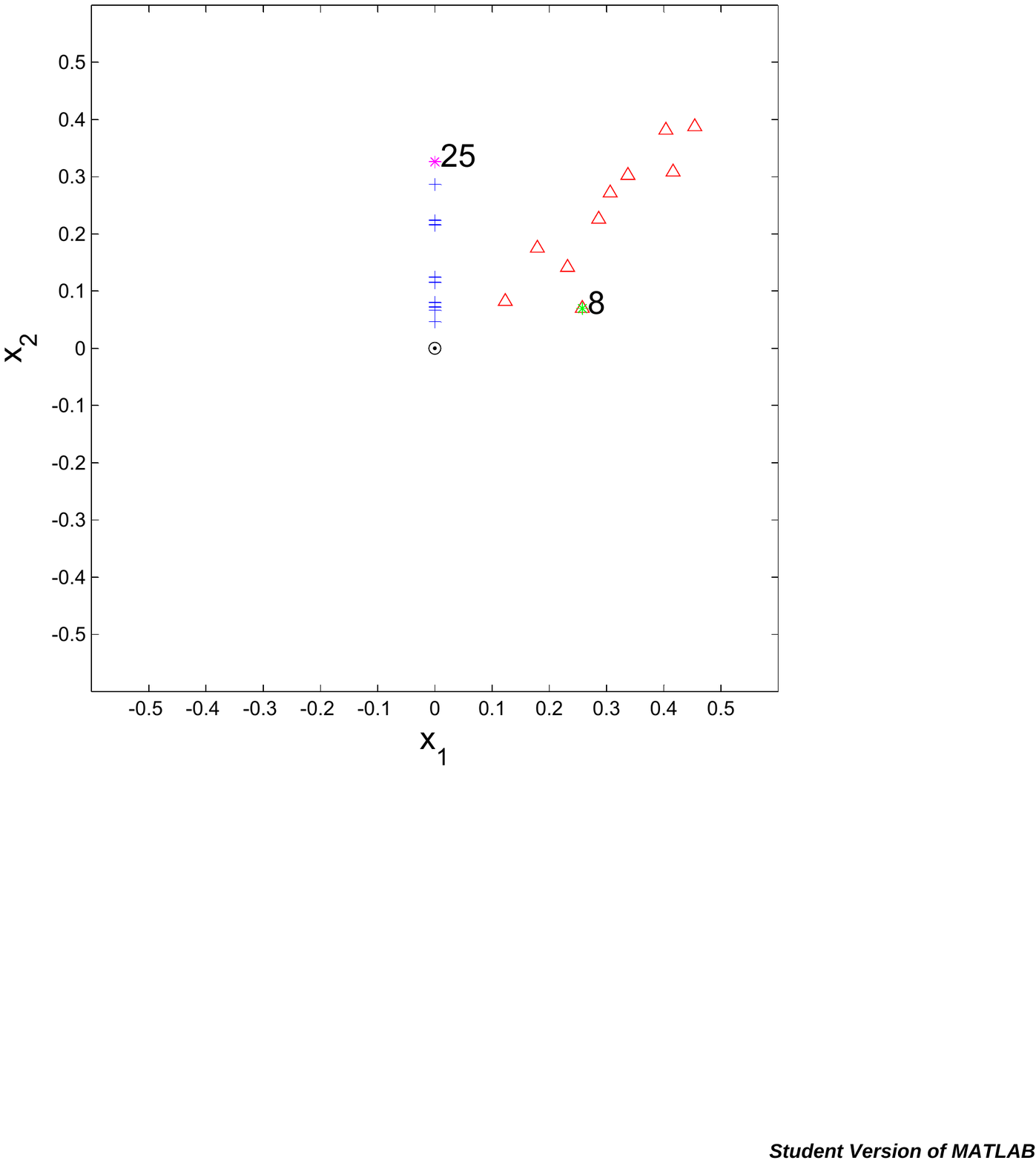}
\label{fig:825neg}}
\subfigure[Adding 25$\rightarrow$8 positive]{%
\includegraphics[height=34mm,width=38mm]{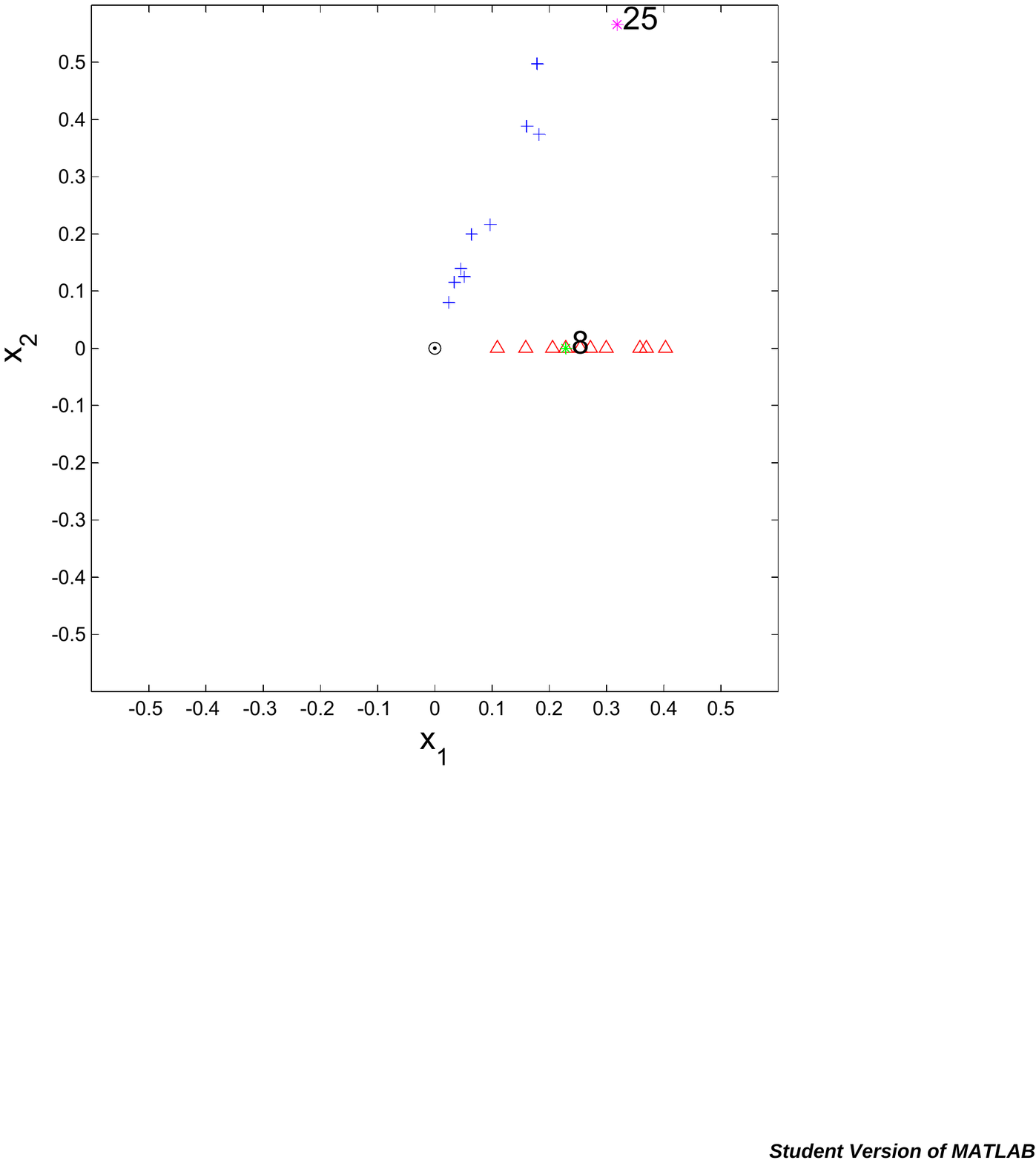}
\label{fig:258}}
\quad
\subfigure[Adding 25$\rightarrow$8 negative]{%
\includegraphics[height=34mm,width=38mm]{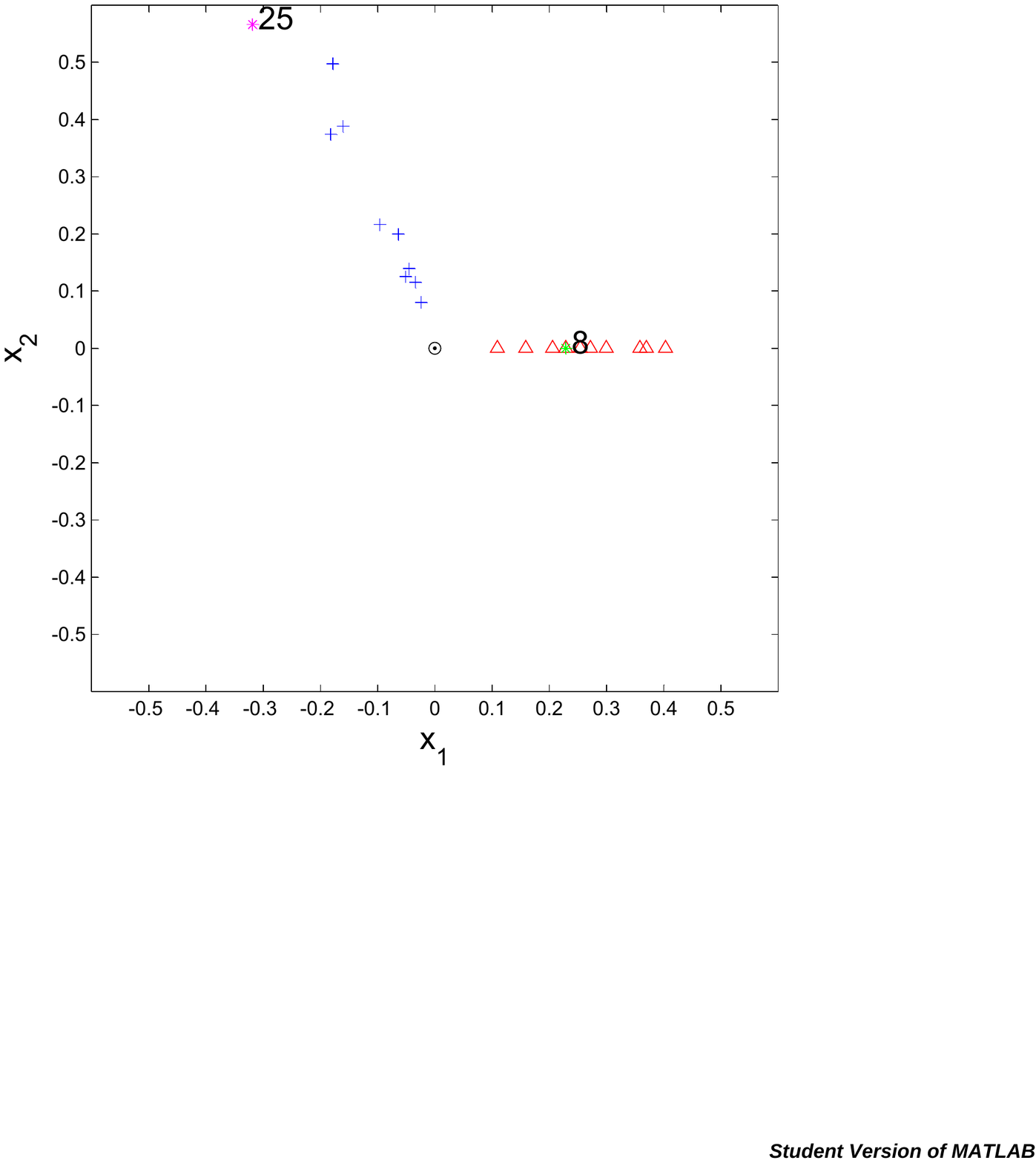}
\label{fig:258neg}}
\subfigure[Adding 8$\leftrightarrow$25 positive]{%
\includegraphics[height=34mm,width=38mm]{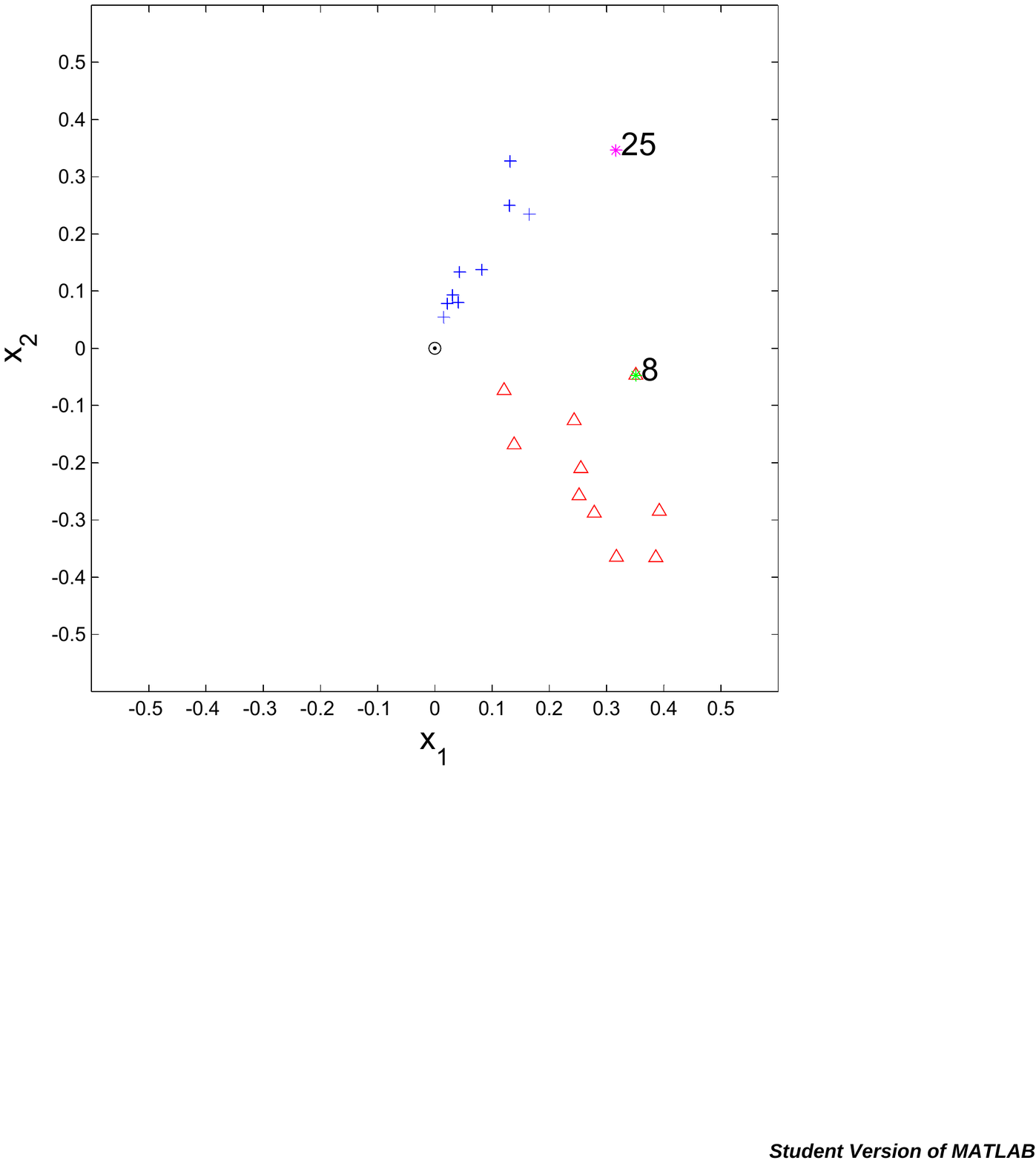}
\label{fig:825bi}}
\quad
\subfigure[Adding 8$\leftrightarrow$25 negative]{%
\includegraphics[height=34mm,width=38mm]{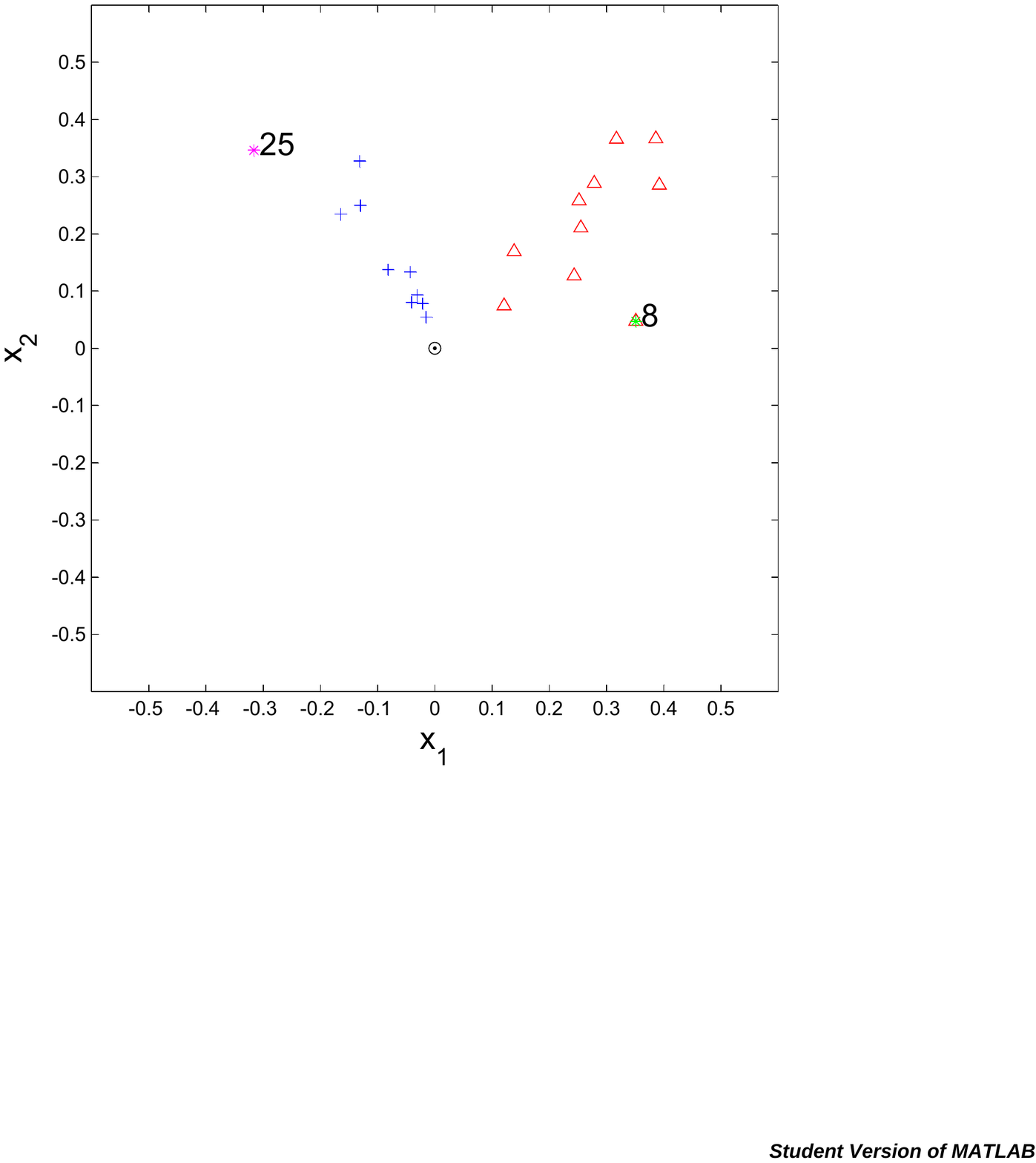}
\label{fig:825bi_neg}}
\caption{Spectral Coordinates of Nodes under Perturbation}
\label{FIG:pos_neg_inter2}
\end{figure}

{\bf \noindent Illustrative Example}. We introduce a toy graph as in Figure \ref{fig:3com} where nodes 8 and 25 are chosen to perform the illustration.
Figure \ref{FIG:pos_neg_inter2}  illustrates the rotations with respect of perturbation edge directions and signs in the spectral space. The  triangles represent nodes from cluster C1, labeled with  1-8 and 15 in Figure \ref{fig:3com}, while the crosses represent nodes from cluster C2, labeled with 16-25. Node 8 is marked with green and node 25 is marked with magenta in order to separate from other nodes. The Perron Frobenius eigenvalue is 1.8839 for cluster C1 and 1.7284 for cluster C2, so that $\lambda_1>\lambda_2$. The sub-figures on the left hand side correspond to positive perturbation, with edge $8 \rightarrow 25$, $8 \leftarrow 25$ , and $8 \leftrightarrow 25$, and those on the right side correspond to the negative perturbation respectively. All the observations match our Theorem \ref{thm:Generalized_Rotation}. For example, Figure \ref{fig:825} shows node 8 and other nodes in C1 rotate clockwisely while node 25 and other nodes in C2 stay on the original line with a positive edge $8 \rightarrow 25$, which matches our result 1(a) in Theorem \ref{thm:Generalized_Rotation}. Similarly, Figure \ref{fig:825neg} shows node 8 and other C1 nodes rotate anti-clockwisely while node 25 and other nodes in C2 stay on the original line with a negative edge $8 \rightarrow 25$, which matches our result 2(a) in Theorem \ref{thm:Generalized_Rotation}. Figures \ref{fig:258} and \ref{fig:258neg} show the effect due to edge directionality. Figures  \ref{fig:825bi} and \ref{fig:825bi_neg} show the combined effects of both directions.

We would emphasize that Figure \ref{fig:825bi} demonstrates the exact rotation phenomena as discovered and examined in \cite{Wu2011} where the spectral analysis of undirected unsigned graphs was conducted.
Our work is the first to discover the relationship of the rotations with the signed edge directions and signs and to explain such phenomena in theory in a much more general setting.

\subsection{Spectral Analysis of Intra Cluster Perturbation}\label{sect:Intra}

In general, the subgraph for each cluster is treated as an intra cluster perturbation from a nonnegative subgraph $A_i$ such that $\widetilde{A_i}=A_i+E_i$, with $E_i$ containing all negative intra cluster edges. This process can be treated as a transition from a nonnegative graph with Perron Frobenius property into a signed graph with uncertain properties. When negative intra cluster perturbations occur, depending on the amount and locations of negative edges added, the perturbed spectral space could be categorized into 3 different types: the perturbed spectral space still has Perron Frobenius property, the spectral radius is an eigenvalue but the associated eigenvector is no longer positive, and spectral radius is complex with complex eigenvectors.

\begin{Definition}\label{def:PFn}
PFn is defined to be the set of matrices having Perron Frobenius property:
The dominant eigenvalue $\lambda_1$ of each component $C_i$ is positive, simple and the corresponding eigenvector $\mathbf{x_1}$ is positive.
\end{Definition}

The Perron Frobenius eigenpair for each cluster will exist for unsigned graphs as long as a large strongly connected core exists. As a result, those real eigenpairs could be used for guiding the clustering process. However, for signed graphs, negative entries within clusters may cause those Perron roots for their characteristic polynomials to change drastically. Furthermore, the corresponding eigenvectors and spectral projections will also change accordingly. Since the coefficients of the polynomial in Equation \ref{eq:Char_Poly} are determined by the determinant $|A-\lambda*I|_{det}$, which is calculated iteratively with the entries of $A$, the polynomial itself could be either increasing, decreasing, concave or convex. Therefore, the resulting eigenvalues could be positive real, negative real, zero or complex.

\begin{Definition}\label{def:Primitive_Irriducible}
A nonnegative matrix $A$ is called primitive, if there exist some real positive number $M$ such that $A^M$ is positive. If the associated adjacency matrix $A$ for a given graph $G$ is nonnegative irreducible, then $A$ is primitive.
\end{Definition}

\begin{Definition}\label{def:Eventually_Pos}
A matrix $A$ is called eventually positive, if there exist some real positive number $M_0$ such that $A^m$ is positive for all $m\geq M_0$.
\end{Definition}

\begin{Definition}\label{def:WPFn}
WPFn is defined to be the set of matrices having weak Perron Frobenius property:
The dominant eigenvalue $\lambda_1$ is positive and the corresponding eigenvector $\mathbf{x_1}$ is nonnegative.
\end{Definition}

According to the extension to Perron Frobenius theorem, when $A$ is primitive, the Perron Frobenius eigenvalue $r$ is real and strictly bigger than any other eigenvalue $\lambda$ in absolute value. Furthermore, the associated eigenvector is positive. This extension leads to a series of studies related to matrices belonging to the PFn set.

In \cite{zaslavsky1999jordan}, the authors showed that the set of eventually positive square real matrices is equivalent to the set of those having Perron Frobenius properties. Therefore, a very important implication  is that: if a cluster as a subgraph has the eventually positive properties, then it belongs to the PFn set.
However, we do not have any explicit results to show how small the negative entries should be in order for a graph to retain Perron Frobenius property.
Therefore, we can only have the following two results for DSGs: First, depending on the density and location of the negative edges, the Perron Frobenius eigenpair may disappear as the perturbed graph may not belong to WPFn. Second, in the case where the spectral radius is still an eigenvalue, the associated eigenvector may no longer be positive.

In summary, when the perturbed spectral space still has a real eigenvalue as its spectral radius, all the theoretical results from the previous sections and work \cite{li2015analysis} will hold. Therefore, the spectral space properties of observed signed graphs behave exactly the same as those of nonnegative graphs. With an increasing number of negative edges added to a graph that belong to PFn, there should exist a certain threshold above which we have a complex eigenvalue as the adjacency matrix's spectral radius.

\begin{table*}[!htb]
\centering
\begin{tiny}
\caption{Statistics of synthetic data and partition quality}
\begin{tabular}{|c||c|c|c|c|c|c||c|c|c|c|c|c|} \hline
        \multirow{2}{*}{Dataset}    &\multicolumn{2}{|c|}{Edge/$+$ratio/$-$ratio}& \multirow{2}{*}{$k$}   &\multirow{2}{*}{DBI}  & \multirow{2}{*}{$Q$}  & \multirow{2}{*}{Angle} & \multicolumn{6}{|c|}{\textsc{Accuracy(\%)}} \\\cline{2-3}\cline{8-13}
           &Intra &Inter& & & & & SC-SDG &SC-DSG-M & SC-DSG-Re&AugAdj &UniAdj &  SNCut         \\\hline
    \emph{Syn-1}    &67653/0.4/0   & 144283/0.2/0   & 5 &0.1745 &0.2770   & $89.3^\circ$  &100&100 & 100 &100  &100 &100\\\hline
    \emph{Syn-2}   &67588/0.4/0   & 144335/0.1/0.1   & 5 &0.4711 &1.9141   & $92.2^\circ$  &100&100 & 100 &100 &100 &100\\\hline
    \emph{Syn-3}   &67545/0.4/0   & 144362/0/0.2   & 5 &0.0926 &-1.0798   & $90.1^\circ$  &100&100 & 100  &100 &100 &100\\\hline
    \emph{Syn-4}   &67618/0.4/0   & 400420/0.7/0   & 4 &1.9774 &0.0321   & $76.5^\circ$  &\textbf{72.9}   &68.3 &  68.8 &70.3 &71.9    &62\\\hline
    \emph{Syn-5}   &80749/0.4/0.08   & 144294/0.2/0   & 5 &0.4290 &0.2221   & $87.9^\circ$  &100 &100 & 100 &100 &100 &100\\\hline
    \emph{Syn-6}  &81019/0.4/0.08   & 144372/0/0.2   & 5 &0.3827  &0.4442   & $89.1^\circ$  &100 &100 &100 &100 &100 &100\\\hline
    \emph{Syn-7}   &80789/0.4/0.08   & 438193/0.4/0  & 4 &1.4309 & 0.0776   & $82.0^\circ$  &\textbf{92}    &\textbf{92}&  \textbf{92} &\textbf{92} &\textbf{92}   &89.9\\\hline\hline
    \emph{Syn-8}   &101002/0.4/0.16   & 144220/0.2/0   & 5 &1.9958 &0.0749   & $76.8^\circ$ & \textbf{67.5}    &\textbf{62.9}&  \textbf{62.9} &\textbf{62.9} &\textbf{62.9}    &65.1\\\hline
    \emph{Syn-9}    &127448/0.4/0.36   & 144283/0.1/0.1   & 5 &2.9701 & -0.2260   & $82.4^\circ$  &\textbf{59.3}  &57.1 & 58.9 &n/a &55.1    &56\\\hline

\end{tabular}
\label{table:SynthDirectedSigned}
\end{tiny}
\end{table*}

\section{Spectral Clustering for DSGs}\label{sect:Tests}

The results from Section \ref{sect:Inter} described how node spectral coordinates are changed due to inter cluster  perturbation, while the results from Section \ref{sect:Intra} described the potential complex eigenpairs due to  negative intra cluster perturbation. With the two results combined, we have a full picture of the spectral properties of DSGs.

\subsection{Algorithm}

Based on the perturbed Perron Frobenius simple invariant subspace results, we present our spectral clustering based graph partition algorithm  (SC-DSG)  for directed signed graphs.
A key challenge is to determine the proper eigenvectors for clustering from a set of potentially complex eigenvectors.

\begin{algorithm}[htb]
\caption{\textit{SC-DSG}: Spectral Clustering for Directed Signed Graphs} \label{alg:General-Adj-Cluster}%
\textbf{Input:} $A, \tau, \alpha$~ \\
\textbf{Output:} cluster number $K$, clustering result $CL$~\\
\begin{algorithmic}[1]
 \STATE Compute eigenvectors of $A$ corresponding to the largest $\tau$ eigenvalues $\Lambda$ in magnitude, and denote the set $D$;
 \STATE Normalize the eigenvectors $\bar{\coord}_u = \frac{\coord_u}{\|\coord_u\|}$;
 \STATE $C \leftarrow$ real eigenvectors from the set $D$ with same signed components;
 \STATE $K \leftarrow Cardinality(C)$;
 \STATE $M\leftarrow -\inf$;
 \FOR {each $c \in \emptyset \cup D\setminus C$}
    \IF {$c$ is complex}
        \STATE $c \leftarrow$ split into [Re($c$) Im($c$)]
    \ENDIF
    \STATE Apply \kmeans on $C \cup c$ to get clustering result $CL_{temp}$ of $K$ clusters;
    \STATE Compute the signed modularity scores $M_{temp}$;
    \IF {$M_{temp} \geq \alpha M$ ($\alpha \in [0,1]$ adjusts the objective function)}
        \STATE $K \leftarrow K+1$;
        \STATE $C \leftarrow C \cup c$;
        \STATE $CL \leftarrow CL_{temp}$;
        \STATE $M \leftarrow M_{temp}$;
    \ENDIF
 \ENDFOR
 \STATE $K \leftarrow K-1$;
 \STATE Return number of clusters $K$ and clustering result $CL$;
\end{algorithmic}
\end{algorithm}

Algorithm \ref{alg:General-Adj-Cluster} includes the following major steps: computing eigenvectors of the adjacency matrix; normalization of the eigenvectors; selecting the initial set of eigenvectors with same signed components; splitting complex eigenvectors into real and imaginary parts;
projection of the nodes onto a unit sphere; clustering the nodes according to their location on the unit sphere using the classic k-means clustering algorithm; screen all the potential eigenpairs based on the signed modularity to find meaningful partitions.

{\bf \noindent Handling Complex Eigenpairs}.
Our algorithm uses  eigenvectors corresponding to the spectral radii of each component. However, due to negative intra cluster perturbations, the eigenpairs corresponding to the spectral radii of the clusters may not be real anymore. As a result, the Perron Frobenius invariant subspace tends to diminish in DSGs. The problem is how to define the Euclidean topology in the complex coordinate space formed by the complex basis.
The k-means clustering used in most of the spectral based clustering methods could not produce meaningful results in the coordinate space of $\mathbb{C}^n$, since the Euclidean distance of two complex coordinates with only imaginary part will be negative. In our algorithm, we propose to split the complex spectral coordinate space corresponding to the Perron Frobenius invariant subspace into a higher dimensional real coordinate space and apply the k-means method in the real space.

Since the vector space of $\mathbb{C}^n$ is an isomorphism of $\mathbb{R}^{2n}$, the complex portion of the simple invariant subspace could be split into two parts that contain the real and imaginary parts of the original complex coordinates respectively.
Assume that the perturbed simple invariant subspace is of dimension $K$ and there are $C$ complex eigenvectors, then the newly formed spectral space will have $K+C$ dimensions. However, we still want to form $K$ clusters in this space, since we are embedding only $K$ communities into this space.

Our algorithm uses the stepwise forward strategy to find a set of eigenvectors to maximize the signed modularity \cite{anchuri2012communities}.  For a graph with two clusters, it has the form:
\begin{multline}
\label{eq:Signed_Lap}
Q_s=\sum_{i,j\in C_1}(P_{ij}-\frac{d_{p_i}d_{p_j}}{2m_p})+\sum_{i,j\in C_2}(P_{ij}-\frac{d_{p_i}d_{p_j}}{2m_p}) \\
+\sum_{i\in C_1,j\in C_2}(N_{ij}-\frac{d_{n_i}d_{n_j}}{2m_n})+\sum_{i\in C_2,j\in C_1}(N_{ij}-\frac{d_{n_i}d_{n_j}}{2m_n}),
\end{multline} where $\frac{d_id_j}{2m}$ is the expected number of edges between nodes i and j, $P_{ij}=\frac{A+|A|}{2}$ and $N_{ij}=\frac{A-|A|}{2}$.

A good partition seeks to achieve the signed modularity score as high as possible.  However, exhaustive search is infeasible to identify the optimal set because it is NP-hard.
However, as pointed out in the work \cite{lancichinetti2011limits}, modularity has two limitations in community detection: the tendency to merge small subgraphs and the tendency to split large subgraphs. Both of those drawbacks could cause the network partition to produce incorrect results. We  include a parameter $\alpha$ so that the importance of the objective function could be linearly adjusted, since some partitions could still be meaningful even if the resulting signed modularity score is slightly lower. As shown in the work \cite{li2015analysis}, it can be tuned down to detect substructures of clusters.

\begin{table*}[!htb]
\centering
\begin{small}
\caption{Real Data Statistics and Results \label{table:RealStat}}
\begin{tabular}{|c||c|c|c||c|c|c|c|} \hline
        \multirow{2}{*}{Algorithm} &\multicolumn{3}{|c||}{\textsc{Network Statistics}}  & \multicolumn{4}{|c|}{\textsc{Signed Modularity/Clusters}} \\\cline{2-4}\cline{5-8}
        & Nodes&+Edges&-Edges&SC-DSG &  AugAdj &UniAdj &  SNCut        \\\hline
    \emph{Sampson's} & 18& 97&87 & \textbf{2.52}/3  & 1.4200/2 &  -0.8757/15  & -1.0503/11 \\\hline
    \emph{Slashdot}  &79120 & 370234& 117517& \textbf{0.16}/6  & 0.1512/4 & 0.155/5 & 0.1072/22  \\\hline
    \emph{Wikisigned}&138592 & 650653&89744 &  \textbf{0.1734}/5 & 0.0785/3 & 0.0848/5 & 0.0789/37   \\\hline
    \emph{Epinion}   &131828 &717667 &123705 & \textbf{0.3416}/5  & 0.337/6 & -0.174/5 & 0.2595/13  \\\hline
\end{tabular}
\end{small}
\end{table*}

\section{Experiment}

\subsection{Baseline Algorithms}

In our experiment, we compare our SC-DSG with the following state-of-the-art baseline algorithms:  1) The Augmented$\_$ADJ (AugAdj) \cite{li2015analysis} is an adjacency based spectral clustering method for unsigned directed graph; 2) UniAdj \cite{Wu2014} is an adjacency based spectral clustering method for signed undirected graphs; 3) The signed normalized cut (SNcut) \cite{kunegis2010spectral} is an improved version of the signed Laplacian method where weighting schemes are adjusted to form better partitions; 4) SC-DSG-M is a variant of SC-DSG and only uses  the modulii of the eigenvector entries as spectral coordinates; and 5) SC-DSG-Re is another variant of SC-DSG and only uses the real part of the eigenvector entries as spectral coordinates.

AugAdj  can be used to deal with the directed signed graph as it ignores the use of any complex eigenpairs. Both UniAdj and SNcut require symmetric adjacency matrices as input. In our experiment, we build the symmetrized versions of the original directed graphs by the following process: $A_{ij}=A_{ij}=-1$ if either$A_{ij}=-1$ or $A_{ji}=-1$, $A_{ij}=A_{ij}=1$ if either $A_{ij}=1$ or $A_{ji}=1$, and $A_{ij}=A_{ji}=0$ otherwise.
We limit the search for each method to 50 eigenpairs. Signed modularity, DBI and average angles between clusters in the spectral projection space are reported in addition to accuracy. Note that isolated clusters are orthogonal.
The two variants of SC-DSG are used to demonstrate the usefulness of incorporating the whole complex eigenvectors in the clustering.

\subsection{Synthetic Data}

We generate 9 synthetic graphes. Each graph has 5 clusters with 240,220,200,180 and 160 nodes.
The edges for Syn-1 to Syn-9 are generated using uniform random distribution.
The column ``Intra" of table \ref{table:SynthDirectedSigned} shows the number of intra cluster edges as well as the positive and negative densities.
In particular, the $x/y/z$  means that there are $x$ intra cluster edges and the density of positive (negative) edges is $y\times 100$ ($z\times 100$) percent.
Similarly, the column ``Inter" shows the corresponding statistics for inter cluster edges. The negative edges for both intra and inter clusters are injected into the 5-block graph so that the perturbed graphs possess desired structural properties.

For synthetic data, Syn-1 to Syn-4, there are 40\% positive edges but no negative edges within clusters. The inter cluster positive edge density of Syn-1 is 0.2 and there is no negative inter cluster edge. This is the classic case of the Perron Frobenius clusters. All the methods achieved 100\%  accuracy. The average cluster angle is close to 90 degrees.
 In Syn-2, half of the inter cluster edges from Syn-1 are converted into negative edges. The clusters  belong to the PFn set, so the perturbed Perron Frobenius invariant subspace contains the real eigenvectors corresponding to the spectral radii of the clusters. The average angle between clusters is 92.2 degrees.
In Syn-3, all the inter cluster edges are negative.  All methods achieve 100 percent accuracy for this case and the average angle is 90.1 degrees.
In Syn-4, the inter cluster positive edge density is increased to 0.7 without the inter cluster negative edge. In this setting, all methods report 4 clusters, where the accuracies drop to around 60 to 70 percent. Since we have dense inter cluster connections, the results are expected.

For Syn-5, Syn-6 and Syn-7, the positive (negative) intra cluster edge density is 0.4 (0.08).  For Syn-5, the positive inter cluster density is 0.2 with no negative inter cluster edge.  All methods achieve 100 percent accuracy. For Syn-6,  the negative inter cluster edge density is 0.2. Since the inter cluster contains only negative edge,  all methods still identify the clusters 100 percent. For Syn-7, more  positive inter cluster edges are added. The partition accuracy drops. Same as Syn-4, only 4 clusters are detected. Our algorithm, SC-DSG, achieve the best accuracy among all methods.

For Syn-8, the negative intra cluster perturbation is doubled to 0.16. The positive intra cluster edge density remains as 0.4. The positive inter cluster edge density is set to be 0.2. Since the PFn properties no longer exist, the accuracy values drop by over 20 percent for all  methods. There exist some complex eigenvalues  whose modulii equal the spectral radius.
For Syn-9, the negative intra cluster edge density is further increased to 0.36. Both the positive and negative inter cluster density is 0.1. This is the most complex case for directed signed graphs. With no surprise, all clustering methods perform poorly, with SC-DSG achieving the best accuracy (59.3\%).

 To summarize,  when under small inter cluster perturbations (as the cases for Syn-1 to Syn-3), as long as clusters satisfy the Perron Frobenius property, all methods  perform the same, since the correct perturbed Perron Frobenius simple invariant subspace is captured by all methods.  When under small inter cluster perturbations (as the cases of Syn-5 and Syn-6), as long as clusters satisfy the Perron Frobenius property, all  methods can still achieve the correct partition. As demonstrated in Syn-4 and Syn-7, dense inter cluster edges cause clusters to merge, so the clustering accuracies decrease. As demonstrated in Syn-8 and Syn-9,  when the Perron Frobenius property begins to disappear, the clustering accuracy will decrease  more. In all cases, SC-DSG achieves the best accuracy results for the synthetic experiments.

\subsection{Real Data}

In this section we conduct our empirical experiments on four real directed signed graphs, Sampson's, Wikisigned, Slashdot Zoo and Epinion.
Sampson's work \cite{konect:sampson} contains the opinions of 18 trainee monks about their relationships towards each other during the period of time when the clique fell apart. Each monk was asked to rate others from 1 to 3 based on like or dislike. Later on, the responses were converted into an binary signed adjacency matrix.
Slashdot \cite{kunegis2009slashdot} is a technology news site where users can mark others as ``friend" or ``foe" and influence scores seen by them. Therefore, the entire network could be seen as a trust network.
Wikisigned \cite{konect:2016:wikisigned-k2} contains interactions between the users of the English Wikipedia that have edited pages about politics. Each interaction, such as text editing, reverts, restores and votes are given a positive or negative value. Epinion \cite{leskovec2010signed} is an online product rating website. The users can choose to trust or distrust others and self vote is allowed. As a result, the network could be viewed as a trustworthy relationship network.

Table \ref{table:RealStat} shows the graph statistics, the signed modularity and number of clusters reported by each method.
We can see that our method achieves the best signed modularity value for all four datasets. We observe that the eigenvector associated with cluster 6 is complex for Slashdot, the eigenvector associated with cluster 4 is complex, all the others are real. Note that we cannot report accuracy because of no ground truth about these four datasets.

\section{Related Work}

Spectral methods have been well investigated and successfully adopted in solving graph or network structure related problems
such as community partition \cite{Newman2006a,VonLuxburg2007}, anomaly detection \cite{chandola2009anomaly}, link prediction \cite{al2011survey}. In this section, we focus on related work on spectral analysis of signed networks from applied mathematics and linear algebra.

The theoretical results of graph theories pertaining the spectral properties of general signed graphs are relatively scare. From the first proposition of the Perron Frobenius theorem for nonnegative irreducible square matrices in 1912 \cite{frobenius1912matrizen} to the most recent works \cite{friedland1978inverse,noutsos2006perron,zaslavsky1999jordan} that extended the result into eventually irreducible nonnegative directed graphs a few decades ago, many problems related to signed graphs remain open. However, the results deduced from eventually irreducible nonnegative signed graphs provided us some important insights in handling general signed graphs, which aid us in the process of analyzing the spectral properties of clusters with intra negative edges.

The authors in the work \cite{kunegis2014handbook} pointed out that three properties can be read off the complex eigenvalues: whether a graph is nearly acyclic, whether a graph is nearly symmetric, and whether a graph is nearly bipartite. If a directed graph is acyclic, its adjacency matrix is nilpotent and therefore all its eigenvalues are zero as shown in the book \cite{cvetkovic1995spectra} (page 81).
The complex eigenvalue plot can therefore serve as a test for networks that are nearly acyclic. When a directed network is symmetric, the adjacency matrix $A$ is symmetric and all its eigenvalues are real. As a result, a directed network close to symmetric has complex eigenvalues near the real line. Additionally, the eigenvalues of an undirected bipartite signed graph are symmetric around the imaginary axis, so the amount of symmetric along the imaginary axis can serve as an indicator for bipartivity.

\section{Conclusion}
In this work, we conducted spectral analysis of directed signed graphs. By using matrix perturbation theory, we derived the approximations of the spectral coordinates of nodes in the spectral projection space formed by perturbed Perron Frobenius invariant subspace and explained the effects of added intra and inter cluster edges to the spectral coordinates. A spectral clustering algorithm for directed signed graphs, SC-DSG,  was proposed according to the theoretical results and was tested on both synthetic and real datasets. The results demonstrated the effectiveness of the algorithm.

Although we took the first step in analyzing the adjacency eigenspace of directed signed graphs by generalizing its properties in relation to the underlying network structures, there are still many open problems in this area. Especially, the lacking of the appropriate linear algebra and graph theory tools to fully describe the cluster properties outside of the PFn set makes further extending the current results a very difficult task. However, many future research topics could be built upon our current theoretical framework, such as fraud detection, dynamic network analysis, and signed network embedding. We will also study the scalability of our algorithm. We would emphasize that our algorithm has the same bottleneck, the eigen decomposition, as all other spectral clustering methods. However, out method should have slightly better performance than Laplacian based methods because the adjacency matrix is often sparser than the Lapalician matrix.

\section{Acknowledgments}

This work was supported in part by U.S. National Science Foundation under awards (1564039 and 1564250).

%\bibliographystyle{IEEEtran}
%{\footnotesize
%\bibliography{mergedBIB}}

\end{document}